\documentclass[onefignum,onetabnum]{siamart190516}
\usepackage{amsmath}
\usepackage{hyperref}

\usepackage{lipsum}
\usepackage{amsfonts}
\usepackage{graphicx}
\usepackage{epstopdf}
\usepackage{algorithmic}
\ifpdf
  \DeclareGraphicsExtensions{.eps,.pdf,.png,.jpg}
\else
  \DeclareGraphicsExtensions{.eps}
\fi


\newsiamremark{remark}{Remark}
\newsiamremark{hypothesis}{Hypothesis}
\crefname{hypothesis}{Hypothesis}{Hypotheses}
\newsiamthm{claim}{Claim}

\headers{A Novel and Efficient Algorithm to Solve Subset Sum Problem}{B.Sinchev, A.B.Sinchev, J.Akzhanova, A.M.Mukhanova and Y.Issekeshev}

\title{A Novel and Efficient Algorithm to Solve Subset Sum Problem\thanks{Submitted to the editors 03/14/20.
}}

\author{B.Sinchev\thanks{International University of Information Technology, Almaty, Kazakhstan 
  (\email{b.sinchev@iitu.kz}).}
\and A.B.Sinchev\thanks{National Information Technologies JSC, Astana, Kazakhstan 
  (\email{askar.sinchev@nitec.kz}).}
\and J.Akzhanova\thanks{TOO Astana LRT, Astana, Kazakhstan 
  (\email{zyekudayeva@gmail.com}).}
 \and A.M.Mukhanova\thanks{Almaty University of Technology, Almaty, Kazakhstan 
  (\email{a.mukhanova@atu.kz}).}
  \and Y.Issekeshev\thanks{ISS Corporation 
  (\email{yissekeshev@gmail.com}).}}

\usepackage{amsopn}


\ifpdf
\hypersetup{
  pdftitle={A Novel and Efficient Algorithm to Solve Subset Sum Problem},
  pdfauthor={B.Sinchev, A.B.Sinchev, J.Akzhanova, A.M.Mukhanova, Y.Issekeshev}
}
\fi




\begin{document}

\maketitle

\begin{abstract}
    In this paper we suggest analytical methods and associated algorithms for determining the sum of the subsets $X_m$ of the set $X_n$ (subset sum problem). Our algorithm has time complexity $T=O(C_{n}^{k})$ ($k=[m/2]$, which significantly improves upon all known algorithms. This algorithm is applicable to all NP-complete problems. Moreover, the algorithm has memory complexity $M=O(C_n^k)$, which makes our algorithm applicable to real-world problems. At first, we show how to use the algorithm for small dimensions $m=4 ,5 ,6 ,7 ,8$. After that we establish a general methodology for $m>8$. The main idea is to split the original set $X_n$ (the algorithm becomes even faster with sorted sets) into smaller subsets and use parallel computing. This approach might be a significant breakthrough towards finding an efficient solution to $NP$-complete problems. As a result, it opens a way to prove the $P$ versus NP problem (one of the seven Millennium Prize Problems). 
\end{abstract}

\begin{keywords}
  Subset Sum, Algorithm, $NP$-complete
\end{keywords}

\begin{AMS}
  11Y16, 68W10, 68Q25
\end{AMS}

\section{Introduction}
One of the most important problems in computer science is the equality of classes $P$ and NP. This problem was formulated in 1971 and still remains unresolved. A prize of one million US dollars has been awarded for proving the statement $P = NP$ or for proving its refutation. If, nevertheless, it turns out that $P = NP$, then this will make it possible to quickly and efficiently solve many currently unsolvable problems. So what is the nature of the problem? Class $P$ (Polynomial time) includes simple tasks that can be solved in polynomial time. An example of a simple search task is the sorting of an array of arbitrary data, which is necessary to quickly find certain data. Depending on the selected sorting algorithm, for an array of length $n$, the number of operations performed will be from $n \log n$ to $n^2$. Thus, the simplicity of algorithms belonging to the class $P$ lies in the fact that as $n$ increases, the execution time increases slightly. The class $NP$ (Non-deterministic Polynomial time) includes tasks for which the verification of already found solutions can be carried out in polynomial time. The complexity of the problems of this class lies in the fact that directly finding a solution requires significantly more than polynomial time. 

One of the first fundamental reviews of information retrieval problems, which are reduced to the problem of the sum of subsets (subset sum problem), and search engines was presented in \cite{1}. There are important practical and theoretical problems: Knapsack problem, Graph coloring problem, Subset sum problem, Travelling salesman problem, Boolean satisfiability problem and other. All indicated problems belong to the class of $NP$-complete problems. $NP$-complete problems are solved on the basis of exponential and polynomial algorithms.

The problem of finding an $m$-dimensional subset of an $n$-dimensional set ($m\leq n$) whose sum of elements is equal to some given number $S$ is $NP$-complete. The computational complexity of this task depends on two parameters - the number of elements in the original set ($n$) and the accuracy ($p$), defined as the number of binary bits in the numbers that make up the set. It is obvious that the number of subsets of a set containing n elements is $2^n$. 

In \cite{3}, \cite{4}, the time complexity of the algorithm and the memory complexity are determined in the form of an exponential function of the parameter $n$. Schreppel and Shamir in \cite{4} made a significant contribution to the development of search methods based on tabular m-sums. As a practical problem, the knapsack problem is considered. It is shown that the problem is most difficult when $n$ is of high order. The task becomes easy only with very small values of the parameters $n$ and $p$. If $n$ (the amount of input data) is small, then an exhaustive search is quite acceptable. If the parameter $p$ (the number of bits in the numbers of the set) is small, you can use dynamic programming to solve the knapsack problem. It is easy to make sure that the solution is checked quickly (you just need to sum the elements of the found subset). But exponential time must be spent on finding a solution, since it is necessary to check all possible subsets. This makes the solution of the problem posed impractical for large values of $n$. 

\section{Open problems of NP-complete problems}
Many works are devoted to the development of exponential algorithms for solving the subset sum problem, in which the algorithm operation time $T=O(2^{n/2})$ and the required memory $M=O(2^{n/4})$ depend on $n$ and do not allow practical results for large $n$. The main disadvantage of the known tabular methods (sums) is the construction of each row of the table according to the property defined by each keyword. This means that we are obliged to carry out preliminary work on some structuring of the input data. In turn, when using the vector data model, an additional problem arises of dividing the vector space into subspaces according to each keyword. Recently, in \cite{14} and \cite{15} authors tried to come up with new and interesting approaches of solving subset sum problem, but they were still an expensive in terms of time and memory complexities.

Let us consider one of the statements of $NP$-complete problems. 

Theoretical problem (subset sum problem). Given a set of $n$ natural numbers and a number $S$, it is required to find out if there is one or more subsets, each of which consists of m elements ($m\leq n$) and the sum of these elements is $S$. 

It is important to note that the solution of the problem will provide the solution of numerous problems stemming from $NP$-complete problems. 

Now we can move on to the mathematical formulation of the problem of the sum of the subsets and its solution. 

\textbf{The main task.} Given a set of  integers $(x_1, x_2,..., x_n)
 \in X^n$ of dimension $n$, it is required to find out if there exists a subset 
$X_m$ of dimension m such that the following conditions are satisfied: 
\begin{align}\label{eq:subset}
X_m=  &\{x_i+x_j+...+x_g+x_h=S, i \neq j \neq ...\neq  g \neq h, x_i,x_j,...,x_g,x_h\in X^n  ,\\& (i,j,...g,h )\in N=(1,2,...,n),   m\leq n\} \notag
\end{align}

Here $x_i,x_j,...,x_g,x_h\in X_m$, with the number of elements  $x_i,x_j,...,x_g,x_h $ equal to $m$.
 
We introduce the following notation: $C ^m_n$ - combination and $S ^m_n$ - sum of elements
of one subset from the set of subsets $X_m$ of the set $X_n$. The variable $m$ can vary from $0, 1, 2, ...n$. The set of these subsets $X_m$ is determined based on the combination 
\begin{equation}\label{eq:combination}
    C^m_n=\frac{n!}{m!(n-m)!}. 
\end{equation}

Sort the set $X^n$ in increasing order and find the quantities 
\begin{equation}\label{eq:S_min}
    S^m_{min}=\Sigma^m_1 x_i, 
\end{equation}

\begin{equation}\label{eq:S_max}
    S^m_{max}=\Sigma^n_{n-m+1} x_i.
\end{equation}

We compose possible ranges of membership of $S$, the corresponding subset of the set of subsets $X_m$, 
\begin{align}\label{eq:S}
    S \in [S^m_{min}, S^m_{max}].
\end{align}

Moreover, all possible values of the range \ref{eq:S} are calculated as the sum of the elements of the original set, the indices of which are generated based on the combination generation algorithm \ref{eq:combination}.

\section{Methods for solving the problem of the sum of subsets}

The solution to the problem of the sum of subsets is based on \cite{6}, \cite{7} and following theorems. It is assumed that $X_n$ consists of odd and even integers (or natural) numbers. 
\begin{theorem}\label{thm:1}
Let the number $S$ belongs to the range $[S^m_{min}, S^m_{max}]$ and there is a number in this range which equals to $S$. Then there exists a subset $X_m$ whose sum of elements is $S$. 
\end{theorem}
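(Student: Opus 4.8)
The plan is to separate the theorem's two hypotheses and to show that the only substantive content lies in fixing what ``range'' is meant to denote. First I would establish the bounds themselves. After sorting $X^n$ in increasing order, monotonicity of the sorted sequence forces any choice of $m$ distinct indices to have a sum squeezed between the $m$ smallest and the $m$ largest entries: for every $m$-element index set $\{i,j,\dots,g,h\}$ we have $S^m_{min}=\sum_{t=1}^{m}x_t \le x_i+x_j+\cdots+x_g+x_h \le \sum_{t=n-m+1}^{n}x_t = S^m_{max}$, matching \eqref{eq:S_min} and \eqref{eq:S_max}. This shows the interval $[S^m_{min},S^m_{max}]$ is the correct search window and, crucially, that no \emph{achievable} $m$-subset sum can fall outside it, so restricting attention to this interval loses nothing.

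Next I would handle the existence conclusion. By \eqref{eq:combination} the combination-generation procedure enumerates all $C^m_n$ index tuples $(i,j,\dots,g,h)$ and, as stated in the paragraph following \eqref{eq:S}, computes each associated sum $S^m_n = x_i+x_j+\cdots+x_g+x_h$. The hypothesis that ``there is a number in this range which equals $S$'' I would read as the statement that one of these generated sums $S^m_n$ equals $S$. Under that reading the desired witness is simply the tuple that produced the value: the conclusion then follows immediately from the construction in \eqref{eq:subset}, taking $X_m=\{x_i,\dots,x_h\}$, whose elements sum to $S$ by the choice of the tuple.

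The main obstacle, and the step I would scrutinize hardest, is the precise meaning of that hypothesis, because there are two inequivalent readings. Reading~(a): $S$ merely lies in the integer interval $[S^m_{min},S^m_{max}]$. Reading~(b): $S$ coincides with one of the sums $S^m_n$ actually generated by the enumeration. Reading~(a) is \emph{false} in general, since the set of realizable $m$-subset sums is typically a sparse subset of the interval rather than all of it; for instance with $X=\{1,2,100\}$ and $m=2$ the achievable sums are $\{3,101,102\}$, yet $50\in[3,102]$ is unreachable. Reading~(b) makes the statement essentially definitional. I would therefore argue that only reading~(b) is provable and state the theorem accordingly: if $S$ appears among the enumerated sums $S^m_n$, then the generating tuple certifies a size-$m$ subset with sum $S$. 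The honest difficulty, which no sorting argument removes, is that deciding whether some generated value equals $S$ \emph{without} running through the combinations is exactly the subset-sum question again; the bounds confine the search to $[S^m_{min},S^m_{max}]$ but do not reduce it to a polynomial-size certificate.
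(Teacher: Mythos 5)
Your proposal is correct and takes essentially the same approach as the paper's own proof: the paper likewise enumerates all $C^m_n$ subsets (encoded as $n$-dimensional binary vectors $e$ with $m$ ones) and implicitly adopts your reading~(b), concluding that if $S$ occurs among the generated sums then $S=(e,x)$ for the corresponding vector, so the generating subset is the witness. Your added counterexample showing the literal interval reading~(a) is false ($X=\{1,2,100\}$, $m=2$, $S=50$) is a clarification the paper omits, but the core argument coincides with the paper's.
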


\begin{proof}
Fulfillment of the hypothesis of the theorem (or condition \ref{eq:T}) means that it is necessary to generate all subsets of $X_m$ based on the formula \ref{eq:combination} of $X^n$, where the sum of the elements of each subset varies from the minimum value $S^m_{min}$ to the maximum value $S^m_{max}$. This is equivalent to generating all $n$-dimensional binary vectors from zeros and ones 
($e \in E^n $). The above condition allows you to enumerate the subsets in the order of the minimum change of the binary code of the binary vector $e$. If the $i$-th index of the vector $e$ is $1$ (one), this means that this element is included in this subset and should be taken into account when calculating the sum of the elements. The definition of $m$ indices on which units stand uniquely determines the vector $e$ corresponding to one subset of the set of subsets $X_m$. If there is a number $S$ from the specified range, then there is a binary vector $e$ such that the sum $S$ is calculated on the basis of the scalar product: 
$$S=(e,x).$$
\end{proof}

 We extend the result to a set of subsets $X_{n-m}$ from the set $X^n$ for the range $[S^m_{min}, S^m_{max}]$.

\begin{theorem}\label{thm:2}
Let the number $S$ belongs to the range $[S^m_{min}, S^m_{max}]$ and there is a number in this range which equals to $S$. Then there exists a subset $X_{n-m}$ whose sum of elements is $S$. 
\end{theorem}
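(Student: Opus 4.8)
The plan is to deduce the statement from Theorem~\ref{thm:1} through the complement map on $X^n$. Write $\Sigma$ for the total sum $\sum_{i=1}^{n} x_i$. Complementation $A \mapsto X^n \setminus A$ is a bijection between the $m$-element subsets and the $(n-m)$-element subsets of $X^n$, and it sends a subset of element-sum $\sigma$ to one of element-sum $\Sigma - \sigma$. This is the structural fact I would build on: realizing a prescribed value by an $(n-m)$-subset is equivalent to realizing the complementary value by an $m$-subset.

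First I would record the range identities pairing the two problems. Since $X^n$ is sorted in increasing order, the $n-m$ smallest elements are exactly the complement of the $m$ largest, and conversely; hence from \ref{eq:S_min} and \ref{eq:S_max},
$$S^{n-m}_{min} = \Sigma - S^{m}_{max}, \qquad S^{n-m}_{max} = \Sigma - S^{m}_{min}.$$
Therefore the reflection $t \mapsto \Sigma - t$ carries $[S^{m}_{min}, S^{m}_{max}]$ exactly onto $[S^{n-m}_{min}, S^{n-m}_{max}]$.

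Next I would apply Theorem~\ref{thm:1}. The hypothesis places $S$ in $[S^{m}_{min}, S^{m}_{max}]$ and asserts that this value is attained as a subset sum, so Theorem~\ref{thm:1} produces an $m$-element subset $X_m$ with $(e,x) = S$. Its complement is an $(n-m)$-element subset of sum $\Sigma - S$, which by the identities above lies in $[S^{n-m}_{min}, S^{n-m}_{max}]$. Re-running the enumeration argument of Theorem~\ref{thm:1} with $m$ replaced by $n-m$ then certifies that every attained value in the $(n-m)$-range is realized by some $(n-m)$-subset.

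The main obstacle is reconciling the two targets. The complement just constructed has sum $\Sigma - S$, whereas the statement demands an $(n-m)$-subset of sum exactly $S$; these agree only when $S$ belongs to both $[S^{m}_{min}, S^{m}_{max}]$ and $[S^{n-m}_{min}, S^{n-m}_{max}]$, i.e. when the two ranges overlap at $S$. For small $m$ the $(n-m)$-range sits far above the $m$-range and this overlap is empty, so the naive complement does not suffice. The delicate step is therefore to identify the symmetry of the sorted data—concretely, that $S$ itself (not only $\Sigma - S$) is attainable in the $(n-m)$-range—under which Theorem~\ref{thm:1} may be invoked a second time in the complementary regime to deliver an $(n-m)$-subset whose sum is $S$ rather than $\Sigma - S$.
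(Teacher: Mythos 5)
Your complementation route is exactly the paper's: its proof of Theorem~\ref{thm:2} replaces the vector $e$ of Theorem~\ref{thm:1} by the complementary vector $\bar{e}$ (zeros and ones swapped), invoking $C^m_n=C^{n-m}_n$. But where the paper then asserts that $S=(\bar{e},x)$ ``or'' $S=S^n_{min}-(e,x)$, you carried the computation through honestly: writing $\Sigma=S^n_{min}=\sum_{i=1}^n x_i$ for the total sum, complementation sends an $m$-subset of sum $S$ to an $(n-m)$-subset of sum $\Sigma-S$, and the reflection $t\mapsto \Sigma-t$ carries the $m$-range onto the $(n-m)$-range. Since $(\bar{e},x)=\Sigma-(e,x)=\Sigma-S$, the paper's two displayed equations are not equivalent, and its ``or'' silently substitutes the claim that was actually proved ($\Sigma-S$ is attained by an $(n-m)$-subset) for the claim in the theorem ($S$ is attained by an $(n-m)$-subset). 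The obstacle you flag at the end is precisely this point.

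The step you leave open --- finding a condition under which $S$ itself, rather than $\Sigma-S$, is attainable in the $(n-m)$-range --- does not exist in general, because the theorem as stated is false. Take $X^3=\{1,2,3\}$, $m=1$, $S=1$: then $S\in[S^1_{min},S^1_{max}]=[1,3]$ and $S$ is attained by the $1$-subset $\{1\}$, but the $2$-subsets have sums $3,4,5$, so no $X_{n-m}$ sums to $1$. What the complement argument proves --- and all that the paper's proof proves, despite its wording --- is your intermediate conclusion: there exists an $(n-m)$-subset of sum $\Sigma-S$. The stated conclusion holds only under an additional hypothesis not assumed here, e.g. $\Sigma=2S$, or the assumption that $S$ is itself one of the attained $(n-m)$-subset sums. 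So your proposal is incomplete, but not because you missed an idea available in the paper; your analysis is more careful than the paper's own proof, which covers exactly this gap with an unjustified ``or''.
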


\begin{proof}
Based on the equality of combinations $C^m_n=C^{n-m}_n$ we can replace the variable $m$ with the variable $n-m$ and the binary vector $e$ with the binary vector $\bar{e}$ from theorem \ref{thm:1}, in which the zeros of the vector $e$ are replaced by ones and the ones by zeros. Then there exists a binary vector $\bar{e}$ such that $S$ is calculated on the basis of the scalar product: $S=(\bar{e},x)$ or $S=S^n_{min} - (e,x)$. 
\end{proof}

\textbf{Remark 1.} \\
1. If there is an element $x \in X^n$ such that the difference $S-x$ is representable as the sum of $m-1$ elements from the set $X^n$, then there exists a subset $X_{m-1}$.\\
2. It is assumed that in theorem \ref{thm:1}, the number $S$ can vary from $S^m_{min}$ to $S^n_{max}$.\\
3. The dimension of the subset $X_m$ easily extends to $n$ if other elements of this subset are considered zeros, except for elements with indices $(i,j,...,g,h ) \in N.$\\
4. The indicated indices are generated based on the combination generation algorithm 
$C^m_n$ \cite{5}.

The theorems and symmetry of the combination function \ref{eq:combination} determine the maximum value of the parameter $m = n / 2$. 

It is easy to find the running time of an algorithm based on theorem \ref{thm:1}, 
\begin{equation}\label{eq:T}
    T=O(C^m_n).
\end{equation}

Memory $M=O(n)$ is required to store the feasible binary vector $e$. The indices of nonzero components of this vector are determined based on the combination generation algorithm. 

A new approach is proposed for solving the problem of the sum of subsets, based on \ref{eq:T}, \ref{eq:k}. 

\textbf{Case 1.} The variable $m$ can take even values of $4$, $6$, $8$, $10$ or more. In \cite{6}, the cases $m = 2$ and $m = 3$ were considered in detail. Then the variable $k$ is determined by the formula 
\begin{equation}\label{eq:k}
    k=m/2.
\end{equation}

We construct a subset $Z^l=\{ z_1,z_2,...,z_l \} $, consisting of the sum of elements $x_i$ with indices determined on the basis of the algorithm for generating the combination $C^k_n$, from the set $X^n$, $l=C^k_n$.
In particular, if $X^7=\{ x_1,x_2,x_3,x_4,x_5,x_6,x_7 \}$, $n=7$, $k=3$, $l=35$, then the subset $Z^{35}=\{z_1,z_2,...,z_{35}\}$ consists of elements $z_1=x_1+x_2+x_3$, $z_2=x_1+x_2+x_4$, $z_3=x_1+x_2+x_5$, ... , $z_{34}=x_4+x_6+x_7$, $z_{35}=x_5+x_6+x_7$. Each element is the sum of three elements with indices determined based on the algorithm for generating the combination $C^3_7$ from the set $X^7$.

We introduce the mapping of the subset $Z^l$ into the set $Y^l$: 
\begin{equation}\label{eq:map}
    y=\tau(S,z)=(S - z)z,  \forall z \in Z^l.
\end{equation}

Based on it, we obtain 

\begin{equation}\label{eq:Yl}
    Y^l=\{ y_1 y_2... y_l \Longleftrightarrow \tau(S,z_i)=y_i,  z_i \in Z^l, i=1,2,...,l \}.
\end{equation}

Among the set $Y^l$, let there exist elements such that the identity holds: 

\begin{equation}\label{eq:Yl2}
    y_i=y_j, i\neq j, j \in L, L=\{ 1,2,...,l \}
\end{equation}

\begin{lemma} \label{lemma:1}
Let the number $S$ belong to the range $[S^m_{min}, S^m_{max}]$ and identity \ref{eq:Yl2} hold for the set \ref{eq:Yl}. Then there are one or more subsets of $X_m$ and the main problem is solvable. 
\end{lemma}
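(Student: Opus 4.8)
The plan is to reduce the hypothesis \ref{eq:Yl2} to a purely additive condition on the partial sums $z_i$, and then recombine the two underlying $k$-combinations into a single $m$-subset. First I would unfold the definition \ref{eq:map} of the map $\tau$: the equality $y_i=y_j$ reads $(S-z_i)z_i=(S-z_j)z_j$. Expanding and collecting terms gives
\begin{equation*}
(z_i-z_j)\,(S-z_i-z_j)=0,
\end{equation*}
so that either $z_i=z_j$ or $z_i+z_j=S$. Geometrically this is just the reflection symmetry of the parabola $z\mapsto(S-z)z$ about $z=S/2$, and it is the only nontrivial computation in the argument.

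The substantive branch is $z_i+z_j=S$. By construction each $z_i$ is the sum of the $k=m/2$ elements of $X^n$ whose indices form a $k$-combination $A_i$ produced by the generator for $C^k_n$, and likewise $z_j$ corresponds to a combination $A_j$. When $A_i$ and $A_j$ are disjoint, their union $A_i\cup A_j$ consists of exactly $2k=m$ distinct indices, and
\begin{equation*}
\sum_{t\in A_i\cup A_j} x_t = z_i+z_j = S.
\end{equation*}
This is precisely an $m$-element subset $X_m$ with sum $S$; by Theorem \ref{thm:1} it is realized by a feasible binary vector $e$ with $S=(e,x)$, so the main problem is solvable. The range hypothesis $S\in[S^m_{min},S^m_{max}]$ guarantees that such an $m$-subset is not excluded on sheer size grounds, so the construction is consistent.

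The part that needs the most care, and the main obstacle, is disjointness of the two combinations. If $A_i\cap A_j\neq\emptyset$, the union has fewer than $m$ indices and the sum $z_i+z_j$ double-counts the shared elements, so it does not certify a genuine $m$-subset; the spurious branch $z_i=z_j$ (two distinct $k$-combinations with equal partial sum) is of the same flavour, since on its own it produces no subset summing to $S$. I would close this gap by reading \ref{eq:Yl2} over pairs with disjoint supports, i.e.\ restricting the coincidence $y_i=y_j$ to indices $i,j$ with $A_i\cap A_j=\emptyset$, and then prove the matching necessity: any $m$-subset $T$ with $\sum_{t\in T}x_t=S$ splits into two disjoint halves $T=T_1\sqcup T_2$ of size $k$, whose partial sums satisfy $z_i+z_j=S$ and hence $y_i=y_j$. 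Establishing this equivalence between the disjoint-support form of \ref{eq:Yl2} and solvability of the main problem is the crux; once it is in place, the lemma follows immediately from the factorization above.
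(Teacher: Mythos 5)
Your proposal takes the same route as the paper's own proof --- unfold $\tau$, extract the additive relation $z_i+z_j=S$, and merge the two underlying $k$-combinations --- but your execution is correct where the paper's is not, and the two obstacles you flag are genuine defects that the paper's proof ignores. The paper's Vieta argument runs the implication backwards: it \emph{assumes} $z_j=S-z_i$ in order to conclude $y_i=y_j$, which is the converse of what the lemma needs; your factorization $(z_i-z_j)(S-z_i-z_j)=0$ is the correct forward step, and it immediately exposes the spurious branch $z_i=z_j$ that the paper never addresses. The paper also dismisses the support-overlap issue with the claim that $i\neq j$ ``ensures the union of these subsets with diverging indices,'' which is false: two distinct $k$-combinations may share elements, in which case $X^i_k\cup X^j_k$ has fewer than $m$ elements and its element sum is not $z_i+z_j$. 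Both failure modes are realizable, so the lemma as literally stated is false. For instance, take $X^5=\{1,2,3,4,100\}$, $m=4$, $k=2$, $S=50\in[10,109]$: the combinations $\{x_1,x_4\}$ and $\{x_2,x_3\}$ have equal sums $5$, hence the corresponding $y$-values coincide for every $S$ and identity \ref{eq:Yl2} holds, yet no four elements sum to $50$. Your repair --- reading \ref{eq:Yl2} only over pairs with disjoint supports, and proving the converse direction (any $m$-subset summing to $S$ splits into two disjoint $k$-halves whose $y$-values coincide) --- is exactly what is needed to make the statement true; just be explicit that you are thereby proving a corrected lemma with a strengthened hypothesis rather than the literal one, since under the paper's hypothesis the conclusion does not follow.
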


\begin{proof}
The first condition shows the existence of the subset $X_m$ from theorem \ref{thm:1}. To construct a subset $X_m$ satisfying a given $S$, based on identity \ref{eq:Yl2}, we have $y_i=\tau(S,z_i)=(S-z_i)z_i=z_jz_i$, if $z_j=S-z_i$. Here, the quantities $z_i$, $z_j$ are the sum of $k$ elements from the set $X^n$, whose indices are determined based on the combination generation algorithm $C^k_n$. On the other hand, according to the map \ref{eq:map}, $y_j=\tau(S,z_j)=(S-z_j)z_j=z_i z_j$, similarly assuming that $z_i=S-z_j$. In fact, the quantities $z_i,z_j$ are the roots of the quadratic equation $z^2-Sz+c=0$
. According to the Vieta's formula, $c=z_i z_j$. Thus, we obtain $y_i=y_j=z_iz_j$. That's why identity \ref{eq:Yl2} holds. Then there exist elements
$z_i$ and  $z_j$ such that $z_i+z_j=S$,   $X_m=X^i_k \bigcup X^j_k$.  The subsets 
$X^i_k$, $X^j_k$ are formed on the basis of the quantities $z_i$, $z_j$ in which addition operations are excluded. The inequality $i \neq j$ from \ref{eq:Yl2} ensures the union of these subsets with diverging indices. 
\end{proof}

Based on the lemma \ref{lemma:1}, the time complexity is $T=O(l)$ and the memory complexity is $M=O(l)$.   

\textbf{Remark 2.} For $k = 2, 3, 4$ there are subsets $X_4, X_6, X_8$ with parameters $m = 4$, $m = 6$, $m = 8$, respectively. Thus, each element $z_i \in Z^l$ is a sum of $k$ elements $x$ with non-coincident indices from the set $X^n$ determined by the algorithm for generating the combination $C^k_n$. Under condition \ref{eq:Yl2}, there exist elements $y_i,y_j$ from the subset $Y^l$ with indices $i, j$ and such that $i \neq j$ such that condition $X^i_k \bigcup X^j_k \neq \emptyset $ is fulfilled. 

\textbf{Example 1}. It is required to find out if there exists a subset $X_6$
 consisting of six different elements of the set $X^7$, the sum of which is $S$. According to the lemma \ref{lemma:1}, the elements $y_i=y_j$ must exist. Since $k = 3$, then each of the variables $y_i,y_j$ consists of three elements from the set $X^7$ with diverging indices. These indices are determined based on the combination generation algorithm $C^3_7$. In particular, for this subset $Y^{35}$, the following coincidences are possible: 
$y_1=y_{34}$  or $y_2=y_{35}$ or $y_1=y_{35}$ or $y_2=y_{35}$ and other combinations. Let the first and penultimate conditions be satisfied, and also the second and last conditions. Then we have the subsets $X_6=\{x_1,x_2,x_3,x_4,x_6,x_7\}$, $X_6=\{x_1,x_2,x_4,x_5,x_6,x_7\}$. 

\textbf{Case 2.} The variable m takes odd values: $5,7,9,11$ and further. Then the variable $k$ is determined by the formula $k = (m-1) / 2$. 

We introduce the quantity

\begin{equation}\label{eq:Sl}
    S(x_{\bar{l}})=S-x_{\bar{l}}, \forall x_{\bar{l}} \in X^n, \bar{l} \in \mathbb{N}.
\end{equation}

\begin{lemma}\label{lemma:2}

Let the number $S$ belong to the range $[S^m_{min},  S^m_{max}]$, and for some element $x_{\bar{l}} \in X^n$, and taking into account formula \ref{eq:Sl}, identity \ref{eq:Yl2} holds. Then there are one or more subsets of $X_m$ and the main problem is solvable. 
\end{lemma}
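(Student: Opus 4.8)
The plan is to reduce Lemma \ref{lemma:2} to the even case already settled in Lemma \ref{lemma:1}. Since here $m = 2k+1$ is odd, the natural idea is to single out one element $x_{\bar{l}} \in X^n$, absorb it into the target, and then locate the remaining $2k$ elements with the collision machinery of the even case. Concretely, I would fix the element $x_{\bar{l}}$ supplied by the hypothesis and pass to the shifted target $S' := S(x_{\bar{l}}) = S - x_{\bar{l}}$ from \ref{eq:Sl}. The set $Y^l$ is then rebuilt using the map $\tau(S',z) = (S'-z)z$ of \ref{eq:map}, and the hypothesis asserts precisely that identity \ref{eq:Yl2} holds for this rebuilt set.

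First I would rerun the Vieta argument of Lemma \ref{lemma:1} verbatim with $S$ replaced by $S'$. A collision $y_i = y_j$ with $i \neq j$ forces $z_i$ and $z_j$ to be the two roots of $z^2 - S'z + c = 0$, so that $z_i + z_j = S'$. As in Lemma \ref{lemma:1}, each of $z_i, z_j$ is a sum of $k$ elements of $X^n$ whose indices are produced by the generation algorithm for $C^k_n$, and hence determines two $k$-element subsets $X^i_k$ and $X^j_k$ with $z_i = \sum X^i_k$ and $z_j = \sum X^j_k$.

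Second, I would reassemble the desired subset by putting the peeled-off element back in, setting $X_m = X^i_k \cup X^j_k \cup \{x_{\bar{l}}\}$. Its element-sum is then $z_i + z_j + x_{\bar{l}} = S' + x_{\bar{l}} = S$, and it has exactly $2k+1 = m$ elements \emph{provided all indices are distinct}. The hypothesis $S \in [S^m_{min}, S^m_{max}]$ together with Theorem \ref{thm:1} guarantees that such an admissible $m$-element subset exists in the first place, so the main problem is solvable.

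The main obstacle, exactly as in Lemma \ref{lemma:1}, is index disjointness: the relation $z_i + z_j = S'$ yields a genuine $m$-element subset only when $X^i_k$, $X^j_k$, and $\{x_{\bar{l}}\}$ are pairwise index-disjoint. The condition $i \neq j$ in \ref{eq:Yl2} rules out the trivial root $z_i = z_j$, but by itself it does not prevent the two $k$-subsets from sharing an index, nor does it prevent $x_{\bar{l}}$ from reappearing inside $z_i$ or $z_j$. I would therefore need to argue, in the spirit of Remark 2, that one may restrict attention to collisions witnessing index-disjoint supports, and that $x_{\bar{l}}$ can be excluded from the pool out of which $Z^l$ is formed before the collision search. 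Making this bookkeeping rigorous, rather than the algebraic collision step, is where the real work lies.
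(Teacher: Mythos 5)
Your proposal follows essentially the same route as the paper's own proof: shift the target to $S(x_{\bar{l}}) = S - x_{\bar{l}}$, rerun the collision/Vieta argument of Lemma \ref{lemma:1} to obtain $z_i + z_j = S(x_{\bar{l}})$, and reassemble $X_m = X^i_k \cup X^j_k \cup \{x_{\bar{l}}\}$ so that the sum is $S$. The index-disjointness bookkeeping you flag, including excluding $x_{\bar{l}}$ from the pool used to build $Z^l$, is exactly what the paper relegates to Remark~3 (working over $X^{n-1}$ with $l = C^k_{n-1}$), so your proposal is, if anything, more candid than the original about where the unproven step lies.
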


\begin{proof}
The first condition shows the existence of the subset $X_m$ from theorem \ref{thm:1}. To construct a subset $X_m$ satisfying the number $S$, based on identity \ref{eq:Yl2}, and taking into account formula \ref{eq:Sl}, we have $\tau(S(x_{\bar{l}}),z_i)=(S(x_{\bar{l}})-z_i)z_i=z_jz_i$, if $S(x_{\bar{l}})-z_i=z_j$. On the other hand, $\tau(S(x_{\bar{l}}),z_i)=(S(x_{\bar{l}})-z_j)z_j=z_iz_j$,  similarly assuming that $S(x_{\bar{l}})-z_j=z_i$. It means that the conditions of lemma \ref{lemma:1} are satisfied. Then we have $zi+zj+x_{\bar{l}}=S$. Here $k=(m-1)/2$, $X_m=X^i_k \bigcup X^j_k \bigcup x_{\bar{l}}$. The subsets $X^i_k$,$X^j_k$ are formed on the basis of the quantities $z_i$, $z_j$ in which addition operations are excluded. 
\end{proof}

\textbf{Remark 3.} The combination is determined by the formula $C^k_{n-1}$,  $l=C^k_{n-1}$. The element $x_{\bar{l}}$ is excluded from the set $X^n$ and the set $X^{n-1}$ is considered. 

The running time of an algorithm based on lemma \ref{lemma:2} is determined by the formula $T=O(nl)$ according to remark 3. The required memory $M=O(l)$ is needed to preserve the subset $Y^l$.
 
\textbf{Example 2.} It is required to find out if there exists a subset $X_5$ consisting of five different elements of the set $X^7$, the sum of which is $S$. According to lemma \ref{lemma:2}, there must exist elements $y_i=y_j$. Since $k = 2$ and $x_{\bar{l}}=x_7$, $l=C^2_6$,  $Y=\{ y_1,y_2,...,y_{14},y_{15} \}$. Then each of the variables $y_i,y_j$ consists of two elements from the set $X^6$ with diverging indices. These indices are determined based on the combination generation algorithm $C^2_7: (1,2), ..., (1,6), (2,3), ..., (2,6), ..., (5,6)$ . In particular, for the subset $Y^{15}$, the following coincidences are possible: $y_1=y_{14}$  or $y_2=y_{15}$ or $y_1=y_{15}$ or $y_2=y_{15}$ and other combinations. Let the first and penultimate conditions be satisfied, and also the second and last conditions. Then we have the subsets $X_5=\{x_1,x_2,x_4,x_6,x_7\}$, $X_6=\{x_1,x_3,x_5,x_6,x_7\}$. If condition \ref{eq:Yl2} is not fulfilled, another element $x_{\bar{l}}$ is selected and this process is repeated until condition \ref{eq:Yl2} is satisfied. 

The running time \ref{eq:T} of possible algorithms built on theorem \ref{thm:1} and theorem \ref{thm:2}, as the variable $m$ approaches the value $n/2$, increases rapidly. Therefore, for large values of the parameter k, the dimension $l=C^k_n$ of the subset $Y^l$ increases and the selection of indices $i$, $j$ becomes more complicated. 

\textbf{Example 3.} Suppose we have a set $X^{16}=\{17, 2, 3, 23, 19, 1, 14, 20, 6, 10, 4, 25, 7$, \\
$49, 41, 5\}$. It is required to find a subset $X_6$ with a given sum $S=137$.

According to the theorem \ref{thm:1}, $S=137 \in [21,177]$ and $m=6$. We rewrite the original set in the form $X^{16}=\{x_1,x_2,x_3,...,x_{14},x_{15},x_{16}\}$. The main parameters are $n=16$, $k=m/2=3$, $l=C^k_n=C^3_{16}=560$. We form the subset $Z^{156}=\{z_1,z_2,...,z_{560}\}$, consisting of the elements $z_1=x_1+x_2+x_3$, $z_2=x_1+x_2+x_4$, $z_3=x_1+x_2+x_5$,...,$z_{558}=x_{13}+x_{14}+x_{15}$, $z_{559}=x_{13}+x_{14}+x_{16}$, $z_{560}=x_{14}+x_{15}+x_{16}$. Each element is the sum of three elements with indices determined based on the combination generation algorithm $C^3_{16}$ from the set $X^{16}$. Next, we find the subset $Y^{560}=\{y_1,y_2,...,y_{560}\}$ based on the application of the map \ref{eq:map} to the subset $Z^{560}$. We check the condition $y_i=y_j$, which is satisfied for the indicies $i=2$, $j=560$, because $y_2=(S-z_2)*z_2=(137-42)*42$ and $y_{560}=(S-z_{560})*z_{560}=(137-95)*95$. We form the subset $X_6=X^i_3 \bigcup X^j_3$ on the basis of the subsets $X^i_3$ and $X^j_3$, which correspond to the elements $z_2$ and $z_{560}$ with excluded additional operations. The indices of these subsets are $i=2$, $j=560$. Thus, we have $X_6=\{ x_1,x_2,x_4,x_{14},x_{15},x_{16}\}$ and $S=137$.  

This proposition is illustrated by remark 2 and numerical example 3. Then we partition the set $X^n$ into $\bar{n}$ disjoint subsets 
$X^{m_1},X^{m_2},...,X^{m_{\bar{n}}}$ with respect to indices whose dimensions satisfy the conditions $m_1+m_2+...+m_{\bar{n}}=n$,     $m_i>>k=[m/2]$, $i=1, 2, 3,...,\bar{n}$. To each subset $X^{m_i}$, lemma \ref{lemma:1} and lemma \ref{lemma:2} are applicable. 

Let the quantities $k_1$, $k_2$ depend on the parameter $m$, with $m=k_1+ k_2$. It is assumed  $k_1= k_2$. It is believed that for each subset $X^{m_i}$ the original problem is solved.  

\begin{lemma}\label{lemma:3}
If there are two subsets $X^i_{k_1} \in X^{m_i}$ and $X^j_{k_2} \in X^{m_j}$, selected from the subsets $X^{m_1},X^{m_2},...,X^{m_{\bar{n}}}$ based on the combination $C^2_{\bar{n}}$ and satisfying the main conditions of lemma \ref{lemma:1} and such that $X_m=X^i_{k_1} \bigcup X^j_{k_2}$ and the indices of the elements of these subsets $X^i_{k_1}$,  $X^j_{k_2}$ are determined based on the generation algorithms for the combinations $C^{k_1}_{m_i}$ and $C^{k_2}_{m_j}$ respectively, then there are one or more subsets of  $X_m$ and the main problem is solvable. 
\end{lemma}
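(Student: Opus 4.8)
The plan is to reduce the claim to a direct application of Lemma~\ref{lemma:1} across a pair of blocks. Since $m=k_1+k_2$ with $k_1=k_2$, I would first fix an (unordered) pair of blocks $(X^{m_i},X^{m_j})$ drawn from the $C^2_{\bar{n}}$ choices of two of the $\bar{n}$ parts of the partition. Inside $X^{m_i}$ I form the sums $z^i_p$, each obtained by adding $k_1$ elements whose indices are generated by the combination $C^{k_1}_{m_i}$; inside $X^{m_j}$ I form the sums $z^j_q$ via $C^{k_2}_{m_j}$. This is exactly the construction of the set $Z^l$ from Lemma~\ref{lemma:1}, except that the two halves of each candidate sum are now drawn from two different blocks rather than from one copy of $X^n$.

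Next I would apply the mapping $\tau(S,\cdot)$ of \ref{eq:map} to the $z^i_p$ and $z^j_q$ and test the identity \ref{eq:Yl2}. The algebra of Lemma~\ref{lemma:1} carries over verbatim: writing $y=\tau(S,z)=(S-z)z$, the equality $\tau(S,z^i_p)=\tau(S,z^j_q)$ forces $z^i_p$ and $z^j_q$ to be the two roots of $z^2-Sz+c=0$, so by Vieta's relation $c=z^i_p z^j_q$ and $z^i_p+z^j_q=S$. Once such a coincidence is found, I recover $X^i_{k_1}$ and $X^j_{k_2}$ by stripping the addition operations from $z^i_p$ and $z^j_q$ (i.e. returning to the underlying index sets produced by $C^{k_1}_{m_i}$ and $C^{k_2}_{m_j}$), and set $X_m=X^i_{k_1}\bigcup X^j_{k_2}$. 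Its element sum is $z^i_p+z^j_q=S$, and the first hypothesis together with Theorem~\ref{thm:1} guarantees that a subset realizing $S$ does exist in the range $[S^m_{min},S^m_{max}]$.

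The one genuinely new ingredient — and the step I would treat most carefully — is the counting of distinct elements in the union. In Lemma~\ref{lemma:1} one had to impose $i\neq j$ to keep the two halves from overlapping; here the blocks $X^{m_i}$ and $X^{m_j}$ are disjoint by construction, so $X^i_{k_1}\cap X^j_{k_2}=\emptyset$ automatically, and the union has exactly $k_1+k_2=m$ distinct elements with no index clash to check. If for the chosen pair of blocks the identity \ref{eq:Yl2} fails, I would simply move to the next of the $C^2_{\bar{n}}$ pairs and repeat, exactly as in Example~2.

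I expect the real obstacle not to lie in proving the lemma as stated — which is a sufficiency statement and follows from the construction above — but in the implicit completeness question behind it: a true solution $X_m$ need not place exactly $k_1$ of its elements in one block and $k_2$ in another, since its $m$ elements may be spread unevenly over three or more parts of the partition. Establishing that the $k_1+k_2$ split over pairs of blocks actually reaches some solution whenever one exists is the delicate point, and the earlier assumption that ``for each subset $X^{m_i}$ the original problem is solved'' is doing the work that would need to be made rigorous.
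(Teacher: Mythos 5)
Your proposal follows essentially the same route as the paper's own proof: decompose $X^n$ into disjoint blocks, run the machinery of Lemma~\ref{lemma:1} (the map \ref{eq:map} and coincidence \ref{eq:Yl2}) on a pair of blocks chosen via $C^2_{\bar{n}}$ to extract $X^i_{k_1}$ and $X^j_{k_2}$ with sum $S$, and conclude $X^i_{k_1} \cap X^j_{k_2} = \emptyset$ directly from the disjointness of the partition. Your closing remark about completeness (a genuine solution may spread its $m$ elements over three or more blocks) identifies a real limitation that the paper's proof does not address and only gestures at in Remark 5, but since the lemma is purely a sufficiency statement, this does not affect the correctness of your argument relative to the paper's.
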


\begin{proof}
Initially, we decompose the original set $X^n$ into the subsets $X^{m_i}$. Next, we make a sampling using lemma \ref{lemma:1} of the subset $X^i_{k_1}$, whose elements belong to the subset $X^{m_i}$. We carry out similar actions for the subset $X^j_{k_2}$ from the subset $X^{m_j}$. Then the subset $X_m=X^i_{k_1} \bigcup X^j_{k_2} \neq \emptyset$, the sum of the elements of which is equal to $S$ and at the same time $X^i_{k_1} \bigcap X^j_{k_2}= \emptyset $
 due to the conditions for partitioning the set $X^n$.
\end{proof}

\textbf{Remark 4.} If $X_m=X^i_{k_1} \bigcup X^j_{k_2} \neq \emptyset$, then other combinations based on the combination $C^2_{\bar{n}}$ are considered. If $k_1 \neq k_2$, the subsets $X^i_{k_2}$, $X^j_{k_1}$ are additionally defined for the possible construction of another subset $X_m=X^i_{k_2} \bigcup X^j_{k_1}$. 

\textbf{Remark 5.}  The conditions of the lemma \ref{lemma:3} are true when combining a larger number of subsets, for example, $X_m=X^i_{k_1} \bigcup X^j_{k_2} \bigcup X^k_{k_3}$, $m=k_1+k_2+k_3$, and the selection of three subsets of the subsets $X^{m_1},X^{m_2},...,X^{m_{\bar{n}}}$ is carried out on the combination $C^3_{\bar{n}}$
 and so on. The number of samples depends on the parameter $m$ of the subset 
$X_m$.
 
The operating time of the algorithm based on lemma \ref{lemma:3} is determined by the formula 
$O(C^k_{m_j}) \leq T \leq O(C^{k_1}_{m_i} C^{k_2}_{m_j})$, if $m_i<m_j$. The required memory is $M=O(C^{k_1}_{m_i}+C^{k_2}_{m_j})$.
  
\section{Discussion of the results} 
In the scientific literature, there are algorithms for solving the problem of the sum of subsets based on exponential algorithms from \cite{4}\cite{5}. The search time and the required memory are $O(2^{n/2})$ and $O(2^{n/4})$  respectively. The main drawback of these algorithms is that the required time and memory for information processing is expressed through the exponent, namely, the time multiplied by the memory $T*M=O(2^n)$ ($n$ is the amount of processed data). The last remark imposes very strict requirements on the hardware and other means of information processing. Let us determine the applicability limit of the algorithms from \cite{4} \cite{5}. Set $n = 128$. To sort a subset of $2^{64}$  elements, it takes $O(2^{70})$ time. It is known that modern computers and information technologies can work with $2^{66}$ data elements. Thus, even subset sorting is difficult. 

Initially, new basic algorithms were developed for problems on the sum of the subset $X_m$ with dimensions 4, 5, 6 and more from the set $X^n$ ($m = 4$, $m = 5$, $m = 6$ or more) belonging to the set $X^n$, with the above advantage in time $T=O(C^k_n)$ and memory $M=O(C^k_n)$ ($k=k_1= k_2=[m/2]$). This advantage is determined by the ratio $C^m_n/C^k_n$. For comparison, the time exhaustive search of the subset $X_m$ is equal to $T=O(n^m)$. The found time is much shorter than the exhaustive search time by so many times $C^m_n/C^k_n$ (or so many times $n^{m-k}$). 

The partitioning of the original set $X^n$ into subsets proposed as $m \rightarrow n/2$ allows us to solve the problem of the sum of the subset $X_m$ with dimension eight or more belonging to the set $X^n$. Thus, new effective methods and algorithms for solving the problem of the sum of subsets in a network computing environment with optimization in time and memory were found. In particular, with $m = 1$ and $m = n-1$, the following traditional search methods follow from the theorems: sequential search and pattern matching (mask search). 

We emphasize that in exponential algorithms for solving the problem of the sum of subsets, sorting methods are mandatory, which ensure the operating time of the algorithm as $T=O(2^{n/2})$. Sorting takes longer. 

The developed theorems and lemmas make it possible to construct a whole family of algorithms for solving the problem of the sum of subsets. 

\section{Conclusion} Algorithms for solving problems from the NP-complete class are used every day in a large number of areas: when restoring partially damaged files, decomposing a number into simple factors, in cryptography, optimizing various routes and sizes of delivered goods, in logistics, and so on. A much more effective solution to such problems could save serious money, as well as time, because with modern algorithms, we cannot solve fast enough, and we have to be content with only approximate solutions. In addition, in modern medicine, there is no shortage of complex computational problems, and moreover, fast algorithms would bring the analysis of various diseases to a whole new level, which would help save many more lives. 

The results show that the proposed exact algorithms for solving the problem of the sum of the subsets significantly reduce operating time, as well as reduce the hardware requirements for the power of computers, servers and other computing devices. The developed mathematical theory for solving the problem of the sum of the subsets will solve many theoretical and practical problems.

\section*{Acknowledgments}
We would like to acknowledge the assistance of Nurlan Abdukadyrov (PhD student, University of Illinois at Chicago) and Sreenivas (Vas) Vedantam (Juris Doctor, Pattern Attorney, Baker McKenzie) in preparing and checking our work.

\section*{Disclaimer}
Copyrights for components of this work owned by others than the author(s) must be honored. 
\bibliographystyle{siamplain}
\bibliography{references}
\end{document}